\title{Stabbing Faces By a Convex Curve}
\author{David Eppstein}{Computer Science Department, University of California, Irvine}{eppstein@uci.edu}{}{Research supported in part by NSF grant CCF-2212129.}
\authorrunning{D. Eppstein}
\keywords{planar graphs, convex curves, stabbing, transversal}
\begin{document}
\maketitle

\begin{abstract}
We prove that, for every plane graph $G$ and every smooth convex curve $C$ not on a single line, there exists a straight-line drawing of $G$ for which every face is crossed by $C$.
\end{abstract}

\section{Introduction}
Every smooth convex curve $C$, not on a single line, contains arbitrarily many points in general position (its tangent points at distinct slopes within a suitable interval). It follows that every outerplanar graph has a straight-line drawing with all vertices on $C$~\cite{GriMohPac-AMM-91}. A graph that can be drawn in this way must be outerplanar, as its vertices all belong to a common face outside $C$. But instead of asking for $C$ to contain all vertices, here we ask: which planar graphs can be drawn so that $C$ intersects all faces? The answer turns out to be all of them.

Our motivation for this question comes from the problem of characterizing universal point sets, sets of points $U_n$ for each $n$ having the property that every $n$-vertex planar graph has a straight-line drawing with its vertices in $U_n$. These sets have size $O(n^2)$~\cite{deFPacPol-Comb-90,Sch-SODA-90,Bra-ICTGGT-08,BanEppChe-JGAA-14}, and $\ge cn$ for a constant $c>1$~\cite{ChrPay-IPL-95,Kur-IPL-04,SchSchSte-JGAA-20}. A recent line of research has investigated the combinatorial structure of universal point sets, in terms of the number of lines necessary to cover all points of such a set. Although point sets on $O(1)$ parallel lines can support only the planar graphs of bounded pathwidth~\cite{DujFelKit-Algo-08}, two crossed lines can support a much wider class of graphs, including all leveled planar graphs~\cite{BanDevDuj-Algo-19}. However, supporting all $n$-vertex planar graphs requires an unbounded number of lines, $\Omega(n^{1/3})$~\cite{TavVer-WG-11,ChaFleLip-JoCG-20,Epp-JoCG-21}. Eppstein~\cite{Epp-18} asked: what if we replace lines in these results by convex curves? Is it possible that universal point sets can be supported by only $O(1)$ convex curves?

This question could be answered by finding a single planar graph $G^?$ that could not be drawn with all of its faces crossed by a single convex curve. If such a graph $G^?$ existed, we could assume without loss of generality that $G^?$ is a triangulation, and then recursively replace triangular faces of $G^?$ by more copies of $G^?$. This recursive replacement process would generate a family of graphs in which, for every drawing, every convex curve touches a number of faces bounded by a sublinear polynomial of the number of vertices. This would imply that universal point sets for graphs in this family could be supported only by a polynomially growing number of convex curves. However, the hope for a proof along these lines is dashed by our result: $G^?$ does not exist.

We remark that the corresponding question for edges rather than faces has an intermediate answer: the class of graphs that can be drawn with all edges touching a smooth convex curve $C$ is broader than the outerplanar graphs but less broad than the planar graphs. In \cref{sec:edges}, we provide an example of a planar graph that cannot be drawn in this way, regardless of the choice of $C$. It takes the form of the graph of nine regular octahedra, with the outer eight octahedra each glued to a separate face of a central octahedron.  But unlike the case for faces crossed by $C$, we do not know how to leverage this example to prove anything about the number of convex curves needed to support a universal point set.

\section{Preliminaries}

Given a plane graph $G$,\footnote{By a \emph{plane graph} we mean a planar graph together with a combinatorial embedding, which should be respected in our drawing of the graph.} to be drawn with all faces crossed by a convex curve, we may assume without loss of generality that $G$ is maximal planar, because if we complete any other graph to a maximal planar graph, and draw the result with $C$ crossing all faces of the maximal planar completion, then $C$ will also cross all faces of the original graph.

For a maximal plane graph $G$, with a fixed choice of outer face, we will draw $G$ using a \emph{canonical ordering}~\cite{Kan-Algo-96}, as used by de Fraysseix, Pach, and Pollack to find grid drawings of planar graphs~\cite{deFPacPol-Comb-90}. This is an ordering of the vertices of $G$ such that the first three vertices in the ordering induce a triangle with one edge (the \emph{base edge}) on the outer face, and such that each prefix of the ordering induces a triangulated disk\footnote{By a \emph{triangulated disk}, we mean a plane graph in which the outer face is a simple polygon and all interior faces are triangles.} in the drawing of $G$, with the bounded faces of this triangulated disk also being faces of the full drawing of $G$. The next vertex in the ordering, after any prefix, must have as its earlier neighborhood a contiguous path along the boundary of the disk, not including the base edge, and the region between this path and the next vertex is triangulated in a fan of triangles meeting at the added vertex. This structure lends itself well to greedy incremental algorithms for planar graph drawing and to induction proofs of properties of these drawings.

For every positive integer $k$, there exist maximal planar graphs in which, for every drawing, the number of faces of the graph that can be crossed by a line is less than a $1/k$ fraction of all faces~\cite{GruWal-JCTA-73}. These graphs therefore cannot have all faces crossed by a $k$-gon. To avoid these examples, our results involve smooth convex curves. There are many ways of defining curves and their smoothness; the properties we require are that the curve must form a connected subset of the boundary of a convex set in the plane (for instance, its convex hull), it must have a unique tangent line at each point, and the slope of the tangent line must be continuous along the curve ($C^1$~smoothness, meaning that the first derivative is continuous). We do not require strict convexity: our curves may contain line segments (along which the tangent slope is constant). For instance, a \emph{stadium}, a convex shape obtained by attaching semicircular end caps to two opposite sides of a rectangle, has a smooth convex boundary in this sense. However, a polygon is not smooth at its vertices. An \emph{arc} of a convex curve is any connected proper subset. The \emph{total curvature} of an arc can be measured by the change in slopes of the tangent lines along the arc, as an angle.

With a stronger smoothness assumption of being twice continuously differentiable ($C^2$~smoothness), the ability to cross all faces of a planar drawing would extend to non-convex smooth curves that do not lie on a line, because each such a curve has a smooth convex arc (within a neighborhood of any point of nonzero curvature) to which we can apply our proof. We omit the details.

\section{Adequate drawings}

We will separate out into lemmas the steps of an induction proof that a given plane graph $G$ has a drawing with all faces crossed by a given smooth convex curve $C$. If we reinterpret the proof as a greedy construction algorithm, each step of the construction has the following property.

\begin{figure}[t]
\centering\includegraphics[scale=0.25]{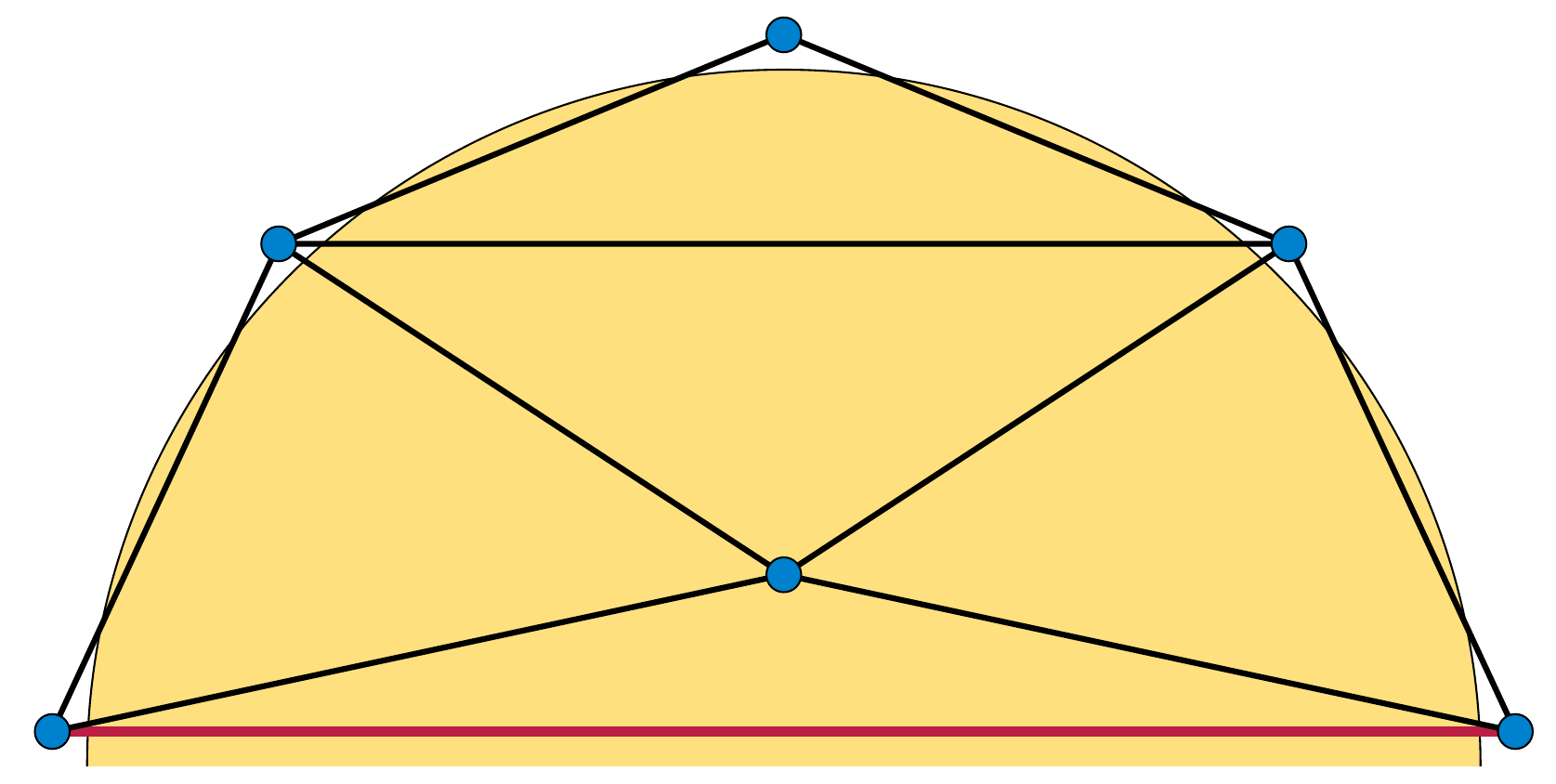}
\caption{Adequate drawing of a triangulated disk (base edge shown in red) with respect to a semicircle: all boundary vertices are exterior to the semicircle and all non-base boundary edges are crossed, twice each, by disjoint arcs of the semicircle.}
\label{fig:adequate}
\end{figure}

\begin{definition}
Given an abstract triangulated disk $D$ with a specified base edge $B$, and a smooth convex curve $C$ that is not on a single line, we define an \emph{adequate drawing} of $D$ to be a straight-line drawing with the following properties:
\begin{itemize}
\item Every boundary vertex of $D$ is exterior to the convex hull of $C$.
\item Every boundary edge $e$ of $D$ other than $B$ is crossed twice by $C$. The two crossing points bound an arc of $C$ that is disjoint from $D$ and from the other arcs defined in the same way from other boundary edges of $D$.
\item The crossings of $C$ with $D$ span an arc of $C$ that bends through a total curvature of less than $\pi$.
\end{itemize}
\end{definition}

\cref{fig:adequate} depicts an example. Not all faces in this example are crossed by the convex curve $C$ (a semicircle), but the faces incident to the non-base boundary edges are crossed. As the name suggests, it will be adequate to prove that every triangulated disk formed by the canonical ordering has an adequate drawing, as every face will be adjacent to the boundary for the induced disk at some point in the ordering. This ensures the property we are really trying to prove, that the whole drawing of $G$ has every face crossed by $C$:

\begin{lemma}
\label{lem:stab}
Let $G$ be a maximal plane graph $G$ with $n$ vertices, with a straight-line drawing in the plane, and let $C$ be a smooth curve that is not on a single line.
Suppose that $G$ has a canonical ordering with the property that, for each triangulated disk $D_i$ (with $3\le i\le n$) induced by the first $i$ vertices in the canonical ordering, the drawing of $G$ restricted to $D_i$ is adequate. Then the drawing of $G$ has all faces crossed by $C$.
\end{lemma}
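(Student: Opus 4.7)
The plan is to treat each triangular face of $G$ individually and exhibit a point of $C$ in its interior. Three kinds of face arise: the outer face of $G$; the unique interior face $v_1 v_2 v_3$ already present in $D_3$; and every interior face of $G$ first appearing at some step $c \ge 4$.

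For the outer face, I would apply adequacy of $D_n = G$: any non-base edge of the outer triangle is crossed by $C$ twice, and its bounding arc is disjoint from $D_n$, so it lies in the complement of $D_n$, which is the outer face. For $v_1 v_2 v_3$ in $D_3$, adequacy of $D_3$ yields two outside arcs of $C$ on the two non-base edges, disjoint from each other and from $D_3$; since $C$ is a single connected curve, the piece of $C$ joining the two outside arcs along the curve must enter and then exit the interior of $D_3$, producing a subarc of $C$ inside the unique bounded face of $D_3$.

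The substantive case is an interior face $f$ first appearing at some step $c \ge 4$. Let $v_c$ be the last vertex of $f$ in the canonical ordering; the fan structure of the ordering forces the other two vertices $u,v$ of $f$ to be consecutive along the earlier-neighborhood path of $v_c$ on $\partial D_{c-1}$, so the edge $uv$ is a boundary edge of $D_{c-1}$, and non-base because that path excludes the base edge. Adequacy of $D_{c-1}$ then provides two transverse crossings $p_1, p_2$ of $C$ with $uv$ whose bounding arc $\alpha$ of $C$ is disjoint from $D_{c-1}$, hence lies on the side of $uv$ opposite to $D_{c-1}$'s interior. Because $u, v \notin \operatorname{conv}(C)$, the point $p_1$ is in the relative interior of the segment $uv$; moving along $\alpha$ a short distance from $p_1$ then takes one into the open half-plane across $uv$, and in the straight-line drawing of $D_c$ this half-plane coincides locally with the interior of the triangle $f = u v v_c$. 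Hence $C$ enters $\operatorname{int}(f)$ near $p_1$, so $f$ is crossed.

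The subtlety I expect to need genuine care is this last local geometric identification: that near an interior point of $uv$, the open half-plane opposite $D_{c-1}$ equals the interior of $f$. It follows from the straight-line drawing of $f$ on that exterior side of $uv$ together with $p_1$ being bounded away from the endpoints $u, v$ of the segment (where other faces of $D_c$ meet the edge $uv$), both of which use $u, v \notin \operatorname{conv}(C)$.
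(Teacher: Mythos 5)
Your proof is correct and follows essentially the same route as the paper: for each face you locate a non-base boundary edge of some $D_i$ (the edge on the first two vertices of the face for interior faces, any non-base edge of $D_n$ for the outer face) and use adequacy of $D_i$ to place a crossing of $C$ locally inside that face. You spell out more of the local geometry than the paper's one-line ``near these edge crossing points, $\Delta$ itself is crossed'' and you sensibly treat the face $v_1v_2v_3$ separately (where the ``first two vertices'' edge would be the base edge), but the underlying argument is the same.
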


\begin{proof}
Each face $\Delta$ of $G$ has at least one edge $e_\Delta$ that is a boundary edge of some $D_i$, but not the base edge of the canonical ordering.
For the outer face, $e_\Delta$ can be any non-base edge of $D_n$; for any other face, $e_\Delta$ is the edge of $\Delta$ induced by the first two vertices of $\Delta$ to appear in the canonical ordering. Because it is a non-base boundary edge of $D_i$, and because of the assumption that the drawing of $D_i$ is adequate, $e_\Delta$ is crossed (twice) by $C$. Near these edge crossing points, $\Delta$ itself is crossed by $C$.
\end{proof}

\section{Induction proof}

We will prove our main result by induction, using canonical orderings. The next lemma is the base case for this induction proof.

\begin{figure}[t]
\centering\includegraphics[scale=0.25]{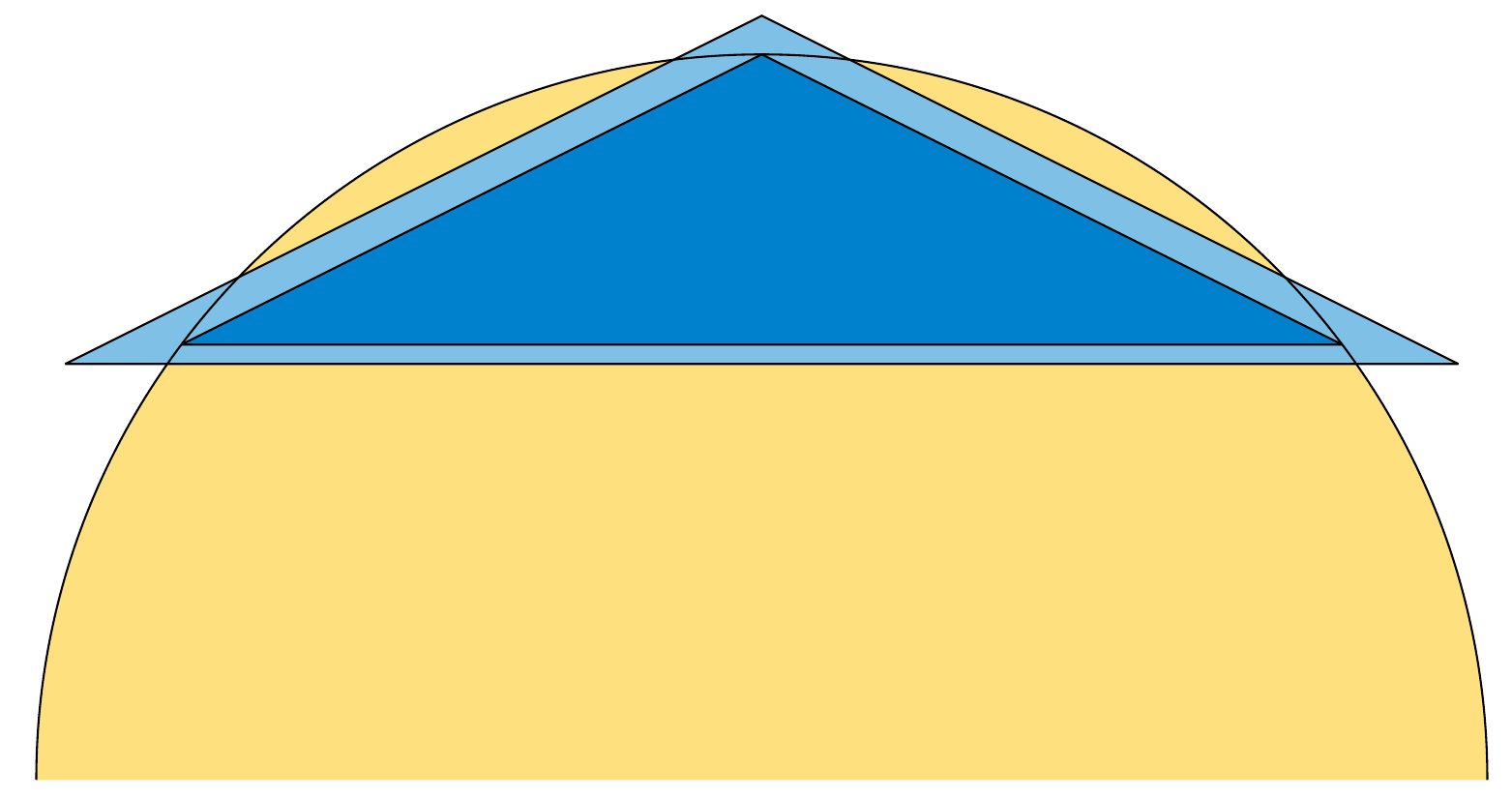}
\caption{Base case (\cref{lem:base}): constructing an adequate drawing of a triangle by scaling an inscribed triangle with three distinct tangent slopes.}
\label{fig:base}
\end{figure}

\begin{lemma}
\label{lem:base}
For every smooth convex curve $C$ that does not lie on a line, there exists an adequate drawing of a triangle with a designated base edge.
\end{lemma}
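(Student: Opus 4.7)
The plan is to obtain the drawing by scaling an inscribed triangle about an interior point, pushing the vertices just outside the convex hull of $C$ while the edges continue to cross $C$ in the required pattern. First, I would pick three points $p_1, p_2, p_3 \in C$, appearing in this order along $C$, whose tangent directions are distinct and which lie on a common sub-arc of $C$ of total curvature strictly less than $\pi$. Such a choice exists because $C$ is $C^1$-smooth and not contained in a line, so its tangent direction varies continuously and non-constantly along $C$, and three distinct directions can be picked from any short sub-interval of the attained range. These three points are automatically non-collinear: three collinear points on a convex curve would have to lie on a line segment contained in $C$, along which the tangent direction would be constant. Hence $\triangle p_1 p_2 p_3$ is a non-degenerate inscribed triangle.

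Next, I would let $O$ be any interior point of $\triangle p_1 p_2 p_3$ (say, its centroid), scale the triangle about $O$ by a factor $1+\varepsilon$ to obtain vertices $q_i = O + (1+\varepsilon)(p_i-O)$, and designate the edge $q_1 q_3$ opposite the middle vertex as the base edge $B$. I claim that for all sufficiently small $\varepsilon>0$ the resulting drawing is adequate. Each $q_i$ lies strictly beyond $p_i$ on the ray from $O$ through $p_i$, hence strictly outside the convex hull of $C$. Each non-base edge $q_i q_2$ (for $i\in\{1,3\}$) has both endpoints outside that hull, while its midpoint is a small perturbation of the midpoint of the chord $p_i p_2$, which lies in the interior of the hull; so by convexity $q_i q_2$ meets $C$ in exactly two transverse crossings, one close to each of $p_i$ and $p_2$.

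The disjointness and exteriority of the bounded arcs come from the direction of the scaling. The arc of $C$ from $p_i$ to $p_2$ not passing through the third vertex lies on the opposite side of the chord $p_i p_2$ from that vertex, and so does the interior of $\triangle p_1 p_2 p_3$. Scaling about $O$, which sits on the third-vertex side of the chord, shifts the chord off itself toward the arc side; the scaled edge $q_i q_2$ therefore crosses $C$ at two points that lie strictly inside the $p_i$-to-$p_2$ arc of $C$, and the sub-arc they bound sits on the opposite side of $q_i q_2$ from the triangle, hence is exterior to the drawing. For $i=1$ and $i=3$ these two bounded arcs lie in disjoint sub-arcs of $C$ on either side of $p_2$, so they are disjoint from one another. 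Finally, all crossings of $C$ with the triangle (including those on the base edge) lie close to $\{p_1,p_2,p_3\}$ for small $\varepsilon$, so the arc they span is contained in a small neighborhood of the $p_1$-to-$p_3$ sub-arc chosen at the outset, and its total curvature remains below $\pi$.

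The main obstacle is making the direction-of-motion argument rigorous: one must verify that, as $\varepsilon$ grows from $0$, each crossing of a non-base edge with $C$ moves \emph{inward} along the $p_i$-to-$p_2$ arc rather than outward past its endpoint, since this is what keeps the two bounded arcs from overlapping or dipping into the triangle. Geometrically it is clear because the scaled edge is shifted off the chord into the region between the chord and the arc, but a formal proof needs either a short implicit-function calculation at each $p_i$ or a direct convexity estimate, made uniform in the three edges in order to fix the allowable range of $\varepsilon$.
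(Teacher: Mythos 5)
Your proposal is correct and follows essentially the same approach as the paper: inscribe a triangle in a sub-arc of $C$ of curvature less than $\pi$ with distinct tangent slopes at its vertices, take the base edge between the two extreme vertices along the arc, and scale about the centroid by $1+\varepsilon$ so the vertices exit the hull while interior points of each non-base edge stay inside, yielding the two required crossings per edge. The details you flag as needing care (that the crossings stay within the correct sub-arcs and the bounded arcs remain disjoint and exterior) are handled in the paper at the same level of rigor you provide.
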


\begin{proof}
Let $A$ be any arc of $C$ of nonzero curvature less than $\pi$, and inscribe a triangle in $A$ so that the tangent lines to $A$ at its three vertices have distinct slopes; this is possible because, by smoothness and nonzero curvature, $A$ has tangents with a continuous range of slopes.  Choose the base edge of this triangle to be the edge connecting the two extreme vertices in their ordering along $A$. Scale the triangle around its centroid by a factor $1+\varepsilon$ for a parameter $\varepsilon>0$ chosen sufficiently small,

By convexity, the inscribed triangle lies in the closed convex hull of $C$, and no point in the interior of one of its edges can lie on curve $C$ itself: if such a point existed then the convexity of $C$ would force $C$ to contain that edge, violating the assumption of distinct tangent slopes at the three vertices. Scaling brings the triangle vertices exterior to $C$, but for sufficiently small choices of $\varepsilon$, some points interior to each non-base edge remain interior to the convex hull of $C$ (\cref{fig:base}). Then each non-base edge of the scaled triangle will have two crossings with $C$ separating its endpoints (exterior to the hull) with the points along the edge interior to the hull. These crossings will separate disjoint arcs of $C$, and together span an arc of $C$ that is a sub-arc of $A$, so all the requirements of an adequate drawing are met.
\end{proof}

The simpler of the two inductive cases involves adding a single triangle as an ear of the triangulated disk, attached to it by a single boundary edge of the previous disk.

\begin{figure}[t]
\centering\includegraphics[scale=0.25]{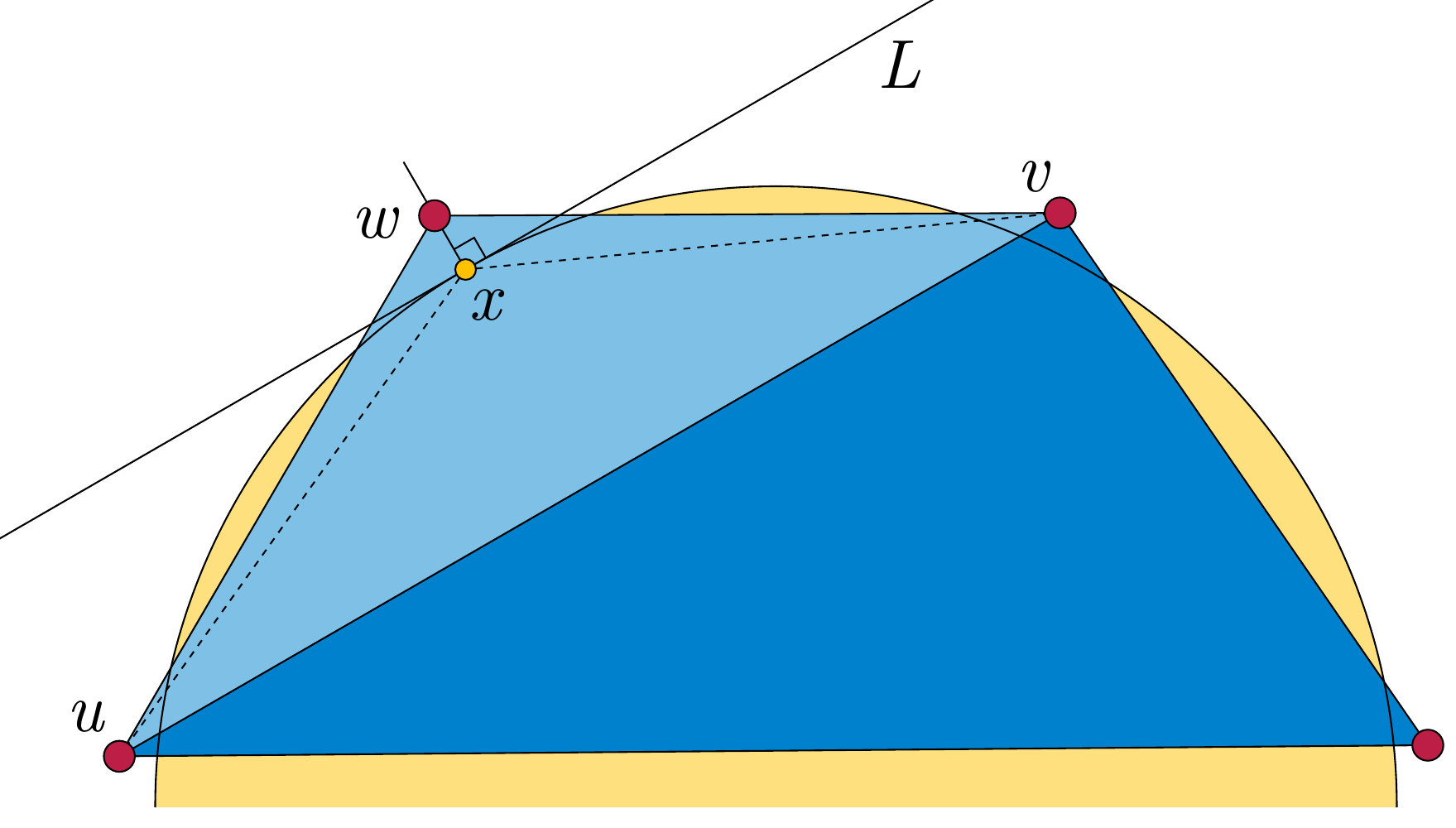}
\caption{Ear case (\cref{lem:ear}): attaching triangle $uvw$ to edge $uv$ of an adequate drawing.}
\label{fig:ear}
\end{figure}

\begin{lemma}
\label{lem:ear}
Let $D$ be a triangulated disk with an adequate drawing with respect to a smooth convex curve $C$, and let $uv$ be any boundary edge of the drawing of $D$ that is not the base edge of the drawing. Then it is possible to add a vertex $w$ adjacent only to $u$ and $v$, forming a triangle attached to $D$ across edge $uv$, and to place $w$ in the drawing so that it becomes an an adequate drawing of the resulting augmented triangulated disk.
\end{lemma}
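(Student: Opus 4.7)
The plan is to place the new vertex $w$ just exterior to the convex hull of $C$, on the outward normal to $C$ at the \emph{apex} of $A_{uv}$---the point $M \in A_{uv}$ at which the tangent to $C$ is parallel to edge $uv$. Let $p,q$ denote the two crossings of $uv$ with $C$; the arc $A_{uv}$ is the sub-arc of $C$ between $p$ and $q$ lying on the side of line $uv$ opposite $D$ (call this the ``ear side''). The apex $M$ exists because the span hypothesis forces $A_{uv}$ to be a proper convex arc rather than a segment. I would set $w := M + \varepsilon\hat n$, where $\hat n$ is the outward unit normal to $C$ at $M$ and $\varepsilon > 0$ is a parameter to be chosen sufficiently small.

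I would then verify the three adequate-drawing conditions for $D' = D \cup \triangle uvw$. The vertex $w$ is exterior to the convex hull of $C$ by construction, so all boundary vertices of $D'$ are exterior. To see that $uw$ crosses $C$ twice (and analogously $vw$), note that at $\varepsilon = 0$ the line through $u$ and $M$ is not tangent to $C$ at $M$ (the tangent at $M$ is parallel to $uv$, whereas $u$ lies on $uv$ and $M$ does not), so segment $uM$ already meets $C$ at a second point strictly interior to the ear-side half of the convex hull; for $\varepsilon>0$ small, the two crossings persist by continuity. Because the open segments $uw$ and $vw$ lie strictly on the ear side of line $uv$, all four crossings lie on $A_{uv}$, and the bounded arcs $A_{uw}$, $A_{vw}$ (the sub-arcs of $C$ disjoint from $\triangle uvw$) are sub-arcs of $A_{uv}$.

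For the remaining disjointness and the span condition, the key ingredient is an infinitesimal computation at $M$: as $\varepsilon$ grows from $0$ the exit crossing of $uw$ on $A_{uv}$ slides toward the $u$-side endpoint $p$, while the exit crossing of $vw$ slides toward the $v$-side endpoint $q$; hence for small $\varepsilon$ the arcs $A_{uw}$ and $A_{vw}$ occupy sub-arcs of $A_{uv}$ on opposite sides of $M$ and are disjoint. They are disjoint from the old arcs of the other non-base boundary edges of $D$ because they sit inside $A_{uv}$, which by adequacy was already disjoint from those; and they are disjoint from $D'$ because each lies on the exterior side of the corresponding triangle edge and above line $uv$. All new crossings of $C$ with $D'$ lie on $A_{uv}$, which is contained in the old span $A^*$ by the inductive invariant established in \cref{lem:base} (each arc $A_e$ of an inductively-built adequate drawing is a sub-arc of the span); so the new span remains inside $A^*$ and still bends through less than $\pi$.

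The main delicate point is the infinitesimal analysis showing that the exit crossings of $uw$ and $vw$ depart from $M$ in opposite directions along $A_{uv}$: the linearization of where line $uw$ meets $C$ near $M$ shifts in one tangential direction as $\varepsilon$ grows, and the corresponding linearization for $vw$ shifts in the opposite direction, because $u$ and $v$ lie on opposite sides of the perpendicular foot of $M$ onto line $uv$. Once this is verified, all the other conditions hold automatically for any sufficiently small $\varepsilon > 0$.
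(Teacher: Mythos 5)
Your construction is the same as the paper's: both place $w$ at a small distance $\varepsilon$ along the outward normal to $C$ at the point of $A_{uv}$ where the tangent is parallel to $uv$, and both argue that for sufficiently small $\varepsilon$ each new edge $uw$, $wv$ acquires one crossing near the corresponding old crossing of $uv$ and a second crossing near that tangent point (the paper phrases this via the polyline $uwv$ going exterior--interior--exterior, you via transversality and continuity, which is the same content). The additional detail you supply on disjointness of the new arcs and on the total-curvature condition goes beyond what the paper writes down (it simply asserts the requirements are met); the one caveat is that your appeal to an ``inductive invariant'' that $A_{uv}$ lies inside the spanned arc is not part of the definition of adequacy, so as written that step establishes the curvature bound only for the adequate drawings actually produced by the induction rather than for an arbitrary adequate drawing---a gap that the paper's own terse proof shares rather than resolves.
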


\begin{proof}
For the notation used here, see \cref{fig:ear}. By the assumption that the drawing of $D$ is adequate, an arc of curve $C$ crosses twice through edge $uv$. By smoothness, this arc has a tangent line $L$ parallel to $uv$, touching the arc at a point $x$ between the two crossings of $C$ with $uv$. Polyline $uxv$ starts outside $C$ at $u$, crosses inside $C$ to touch $C$ at $x$ from the interior of its convex hull, and then crosses $C$ again to reach the point $v$ exterior to $C$. Place $w$ on a perpendicular line to $L$ through $x$, outward from $C$ at a sufficiently small distance $\varepsilon$ from $x$.

If $\varepsilon$ is sufficiently small, the polyline $uwv$ crosses $C$ from exterior to interior at a point near the crossing of $C$ with line segment $ux$, and crosses again from interior to exterior at a point near the crossing of $C$ with line segment $xv$. In order for polyline $uwv$ to be exterior to $C$ at $w$, between these two crossings, each of the two new edges $uw$ and $wv$ must have a second crossing near $w$. Thus, the requirements of an adequate drawing at the new boundary edges $uw$ and $wv$ are met, and the crossings of $C$ with the other boundary edges of $D$ remain unaffected, so the result is an adequate drawing.
\end{proof}

In the remaining inductive case, we add a fan of triangles, connecting an added vertex to a path of two or more edges of the triangulated disk.

\begin{lemma}
\label{lem:fan}
Let $D$ be a triangulated disk with an adequate drawing with respect to a smooth convex curve $C$, and let $P=v_0v_1\dots v_k$ be a path of $k$ boundary edges of the drawing of $D$ (for $k\ge 2$) that does not pass through the base edge of the drawing. Then it is possible to add a vertex $w$ adjacent to each vertex $v_i$ of the path, forming a fan of triangles attached to $P$ across edge $uv$, and to place $w$ in the drawing so that it becomes an adequate drawing of the resulting augmented triangulated disk.
\end{lemma}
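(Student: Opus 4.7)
My plan is to mimic the construction of \cref{lem:ear}, using a single auxiliary tangent point on $C$ whose tangent serves both of the new boundary edges simultaneously. The augmented disk $D'$ is obtained from $D$ by replacing the boundary sub-path $v_0v_1\cdots v_k$ with the two-edge path $v_0wv_k$, so only the two new boundary edges $v_0w$ and $v_kw$ must satisfy the two-crossings requirement; the interior fan edges $wv_1,\ldots,wv_{k-1}$ only need to be drawable as straight segments without introducing a crossing in the plane graph, and the old boundary edges of $D$ outside the path inherit adequacy from the hypothesis.

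By adequacy of $D$, all crossings of $C$ with $D$ lie on a single arc $A^*$ of $C$ of total curvature less than $\pi$, and each path edge $v_{i-1}v_i$ cuts off a pairwise-disjoint sub-arc $A_i\subseteq A^*$ on the outer side of the path. Since $A^*$ has curvature less than $\pi$, its tangent slopes vary continuously across a range of angles less than $\pi$. I would choose a point $p\in A^*$ on the outer side of the path, lying between the two extreme sub-arcs in the order along $A^*$, whose tangent line $L$ at $p$ places both $v_0$ and $v_k$ on the same closed side of $L$ as the convex hull of $C$. Placing $w$ on the outward perpendicular to $L$ at $p$, at distance $\varepsilon>0$ and exterior to the convex hull of $C$, I would argue as in \cref{lem:ear} that for sufficiently small $\varepsilon$ each of the segments $v_0w$ and $v_kw$ stays close to the chord through $p$ from $v_0$ or $v_k$ and crosses $C$ transversally exactly twice: once near the outermost crossing of the corresponding endpoint edge of the path with $C$, and once near $w$ itself. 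The two newly cut-off arcs of $C$ lie in small neighborhoods of $p$, hence inside $A^*$, so the total-curvature condition persists; and by shrinking $\varepsilon$ further, these arcs can be made disjoint from all remaining old boundary arcs and from the interior of $D'$. Planarity of the fan holds because $w$ is on the outer side of the path, from which every $v_i$ is visible.

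I expect the main obstacle to be proving that a single point $p$ works simultaneously for both endpoint edges $v_0v_1$ and $v_{k-1}v_k$. This requires choosing $p$ in the intersection of the ``outer visibility arc'' of $C$ seen from $v_0$ and the corresponding arc seen from $v_k$, and verifying that the perpendicular placement of $w$ at $p$ produces the desired two crossings for both new boundary edges at once. The total-curvature bound on $A^*$, which confines all existing $C$-crossings to less than a half-turn, is what makes an intermediate tangent direction available for such a $p$; pinning down the right $p$ and then choosing $\varepsilon$ small enough to satisfy the disjointness conditions for all of $D$'s other boundary arcs simultaneously is the most delicate geometric step.
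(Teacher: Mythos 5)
Your approach diverges from the paper's, and it breaks down at exactly the step you flagged as delicate. The paper does not reuse the tangent-point device of \cref{lem:ear} here. Instead it extends the two extreme edges of $P$ into rays, from $v_0$ through $v_1$ and from $v_k$ through $v_{k-1}$; the total-curvature bound of adequacy guarantees these rays converge to a crossing point $x$, and $w$ is placed just past $x$ in the wedge opposite $P$. Because the segments $v_0w$ and $v_kw$ are then small perturbations of segments containing the old edges $v_0v_1$ and $v_{k-1}v_k$, they inherit the two existing crossings and essentially the same cut-off arcs, so disjointness and the curvature bound carry over by perturbation, and the wedge structure is what guarantees that $w$ sees all of $v_0,\dots,v_k$.

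The concrete failure in your construction is the location of $p$ and the resulting cut-off arcs. A point of $A^*$ that is both on the outer side of $P$ and between the two extreme sub-arcs must lie in one of the existing caps $A_i$, because between consecutive caps the curve runs through the interior of $D$ (for $k=2$ there is no ``middle'' outer territory at all). With $w$ just outside the hull near such a $p$, the segment $v_0w$ enters the hull of $C$ at a point determined by the direction from $v_0$ toward $p$ --- not, as you assert, near a crossing of $v_0v_1$ with $C$ --- so the arc it cuts off stretches from that faraway entry point to the exit near $p$. Shrinking $\varepsilon$ moves only the exit point, not the entry point, so this arc is not confined to a neighborhood of $p$; in general it contains caps of other boundary edges and portions of $C$ interior to $D$, violating the disjointness requirements of adequacy. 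Finally, ``$w$ is on the outer side of the path'' does not by itself give visibility of every $v_i$: from a point just outside a middle edge of $P$, the endpoints $v_0$ and $v_k$ can be occluded by $v_1$ and $v_{k-1}$. These are precisely the obstructions the paper's ray-intersection placement is designed to avoid, so you should switch to that construction rather than trying to repair the single-tangent-point one.
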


\begin{proof}
For the notation used here, see \cref{fig:fan}. Draw a ray from vertex $v_0$ through vertex $v_1$, and a second ray from vertex $v_k$ through vertex $v_{k-1}$. Each ray crosses $C$ at the two crossing points of an edge of $P$. By the assumption that the drawing is adequate and therefore its crossings with $C$ span an arc of $C$ that bends through an angle less than $\pi$, the two rays cross rather than diverging. Let $x$ be the crossing point of the two rays. At $x$, the two crossing rays subdivide the plane into four wedges. Place $w$ interior to the wedge opposite from $P$, at a sufficiently small distance from $w$.

Vertices $v_1,\dots v_{k-1}$ all lie in the wedge containing $P$, and are visible to $x$ with respect to the given drawing of $D$; however, the two endpoints $v_0$ and $v_k$ of the path are blocked from visibility by $v_1$ and $v_{k-1}$ respectively. Placing $w$ anywhere in the wedge opposite from $P$ preserves the visibilities with $v_1,\dots v_{k-1}$ and allows $w$ to also see the two path endpoints, producing a valid drawing of $D$. The segments $v_0x$ and $xv_k$ cross $C$ twice, along edges $v_0v_1$ and $v_{k-1}v_k$ (which lie on these segments), and if $w$ is sufficiently close to $x$ then the segments $v_0w$ and $wv_k$ will also cross $C$ twice, meeting the requirements of an adequate drawing at the two new boundary edges of the augmented drawing.  The crossings of $C$ with the other boundary edges of $D$ remain unaffected, so the result is an adequate drawing.
\end{proof}

\begin{figure}[t]
\centering\includegraphics[scale=0.25]{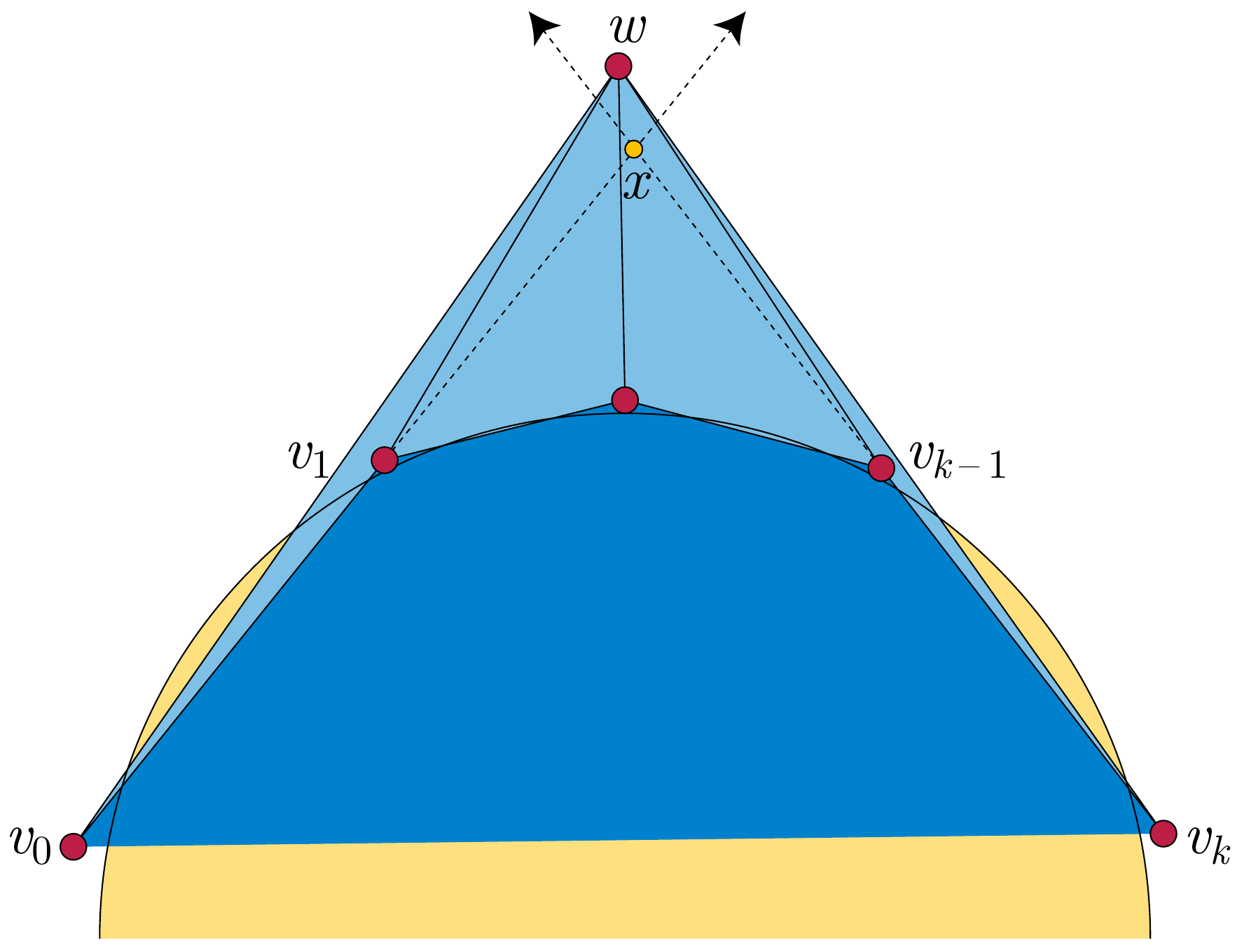}
\caption{Fan case (\cref{lem:fan}): attaching a fan of triangles to path $P=v_0v_1\dots v_k$ of an adequate drawing.}
\label{fig:fan}
\end{figure}

With the lemmas above, we are ready to prove our main result:

\begin{theorem}
Let $G$ be a plane graph, and let $C$ be a smooth convex curve that does not lie on a line. Then $G$ has a straight-line drawing in which all faces are crossed by $C$.
\end{theorem}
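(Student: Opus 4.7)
The plan is to assemble the three preceding lemmas into a straightforward induction along a canonical ordering. First I would invoke the reduction noted at the start of the Preliminaries: since adding edges can only make the task harder, we may assume without loss of generality that $G$ is a maximal plane graph; if not, we complete it to a triangulation, draw the completion so that $C$ crosses every face, and then delete the added edges, which cannot uncross any face of the original $G$.

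Having reduced to the maximal case, I would fix a canonical ordering $v_1,v_2,\dots,v_n$ of $G$ (with $v_1v_2$ the base edge), and let $D_i$ be the triangulated disk induced by the first $i$ vertices. The goal is to construct straight-line positions for the vertices, in order, so that at every step $D_i$ has an adequate drawing with respect to $C$. The base case $i=3$ is exactly \cref{lem:base}: place $v_1,v_2,v_3$ as the scaled inscribed triangle of that lemma, with $v_1v_2$ as the base edge.

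For the inductive step, suppose $D_{i-1}$ has been drawn adequately. By the defining property of a canonical ordering, the earlier neighborhood of $v_i$ is a contiguous path $P=v_{j_0}v_{j_1}\dots v_{j_k}$ on the boundary of $D_{i-1}$ that avoids the base edge, and $D_i$ is obtained from $D_{i-1}$ by attaching the fan of triangles joining $v_i$ to this path. If $k=1$, a single edge, I apply \cref{lem:ear} with $u=v_{j_0}$, $v=v_{j_1}$, $w=v_i$ to place $v_i$ so that $D_i$ is adequate; if $k\ge 2$ I apply \cref{lem:fan} instead, with $w=v_i$ and $P$ as above. Either way, the resulting drawing of $D_i$ is adequate and the drawings of all previously added vertices are unchanged.

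By induction, $D_n=G$ has an adequate drawing, so \cref{lem:stab} applies and every face of $G$ is crossed by $C$. There is essentially no hard step here — the main thing to be careful about is the correspondence between the canonical ordering's two structural cases (ear versus fan) and the two inductive lemmas, and the fact that the canonical ordering explicitly guarantees the attachment path avoids the base edge, which is exactly the hypothesis required by both \cref{lem:ear} and \cref{lem:fan}.
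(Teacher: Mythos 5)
Your proposal is correct and follows essentially the same route as the paper: reduce to a maximal plane graph, induct along a canonical ordering using \cref{lem:base} for the base case and \cref{lem:ear} or \cref{lem:fan} for the two attachment cases, then conclude via \cref{lem:stab}. The only cosmetic point is that \cref{lem:stab} needs all the intermediate disks $D_i$ to be adequate, not just $D_n$, but your induction does maintain exactly that invariant, so the argument is complete.
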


\begin{proof}
Augment $G$ to a maximal plane graph, and construct a canonical ordering. We prove by induction that there exists a drawing $D$ of the augmented graph such that, for each triangulated disk $D_i$ (with $3\le i\le n$) induced by the first $i$ vertices in the canonical ordering, the drawing restricted to $D_i$ is adequate. The base case $i=3$ is \cref{lem:base}, and the induction step for each $i>4$ is either \cref{lem:ear} (if vertex $i$ has two neighbors in $D_i$) or \cref{lem:fan} (if vertex $i$ has three or more neighbors in $D_i$. This proves that $D$ meets the hypotheses of \cref{lem:stab}, so every face of the augmented graph is crossed by $C$. Each face of $G$ itself is a union of faces of the augmented graph, so it is also crossed by $C$.
\end{proof}

\begin{figure}[t]
\centering\includegraphics[scale=0.25]{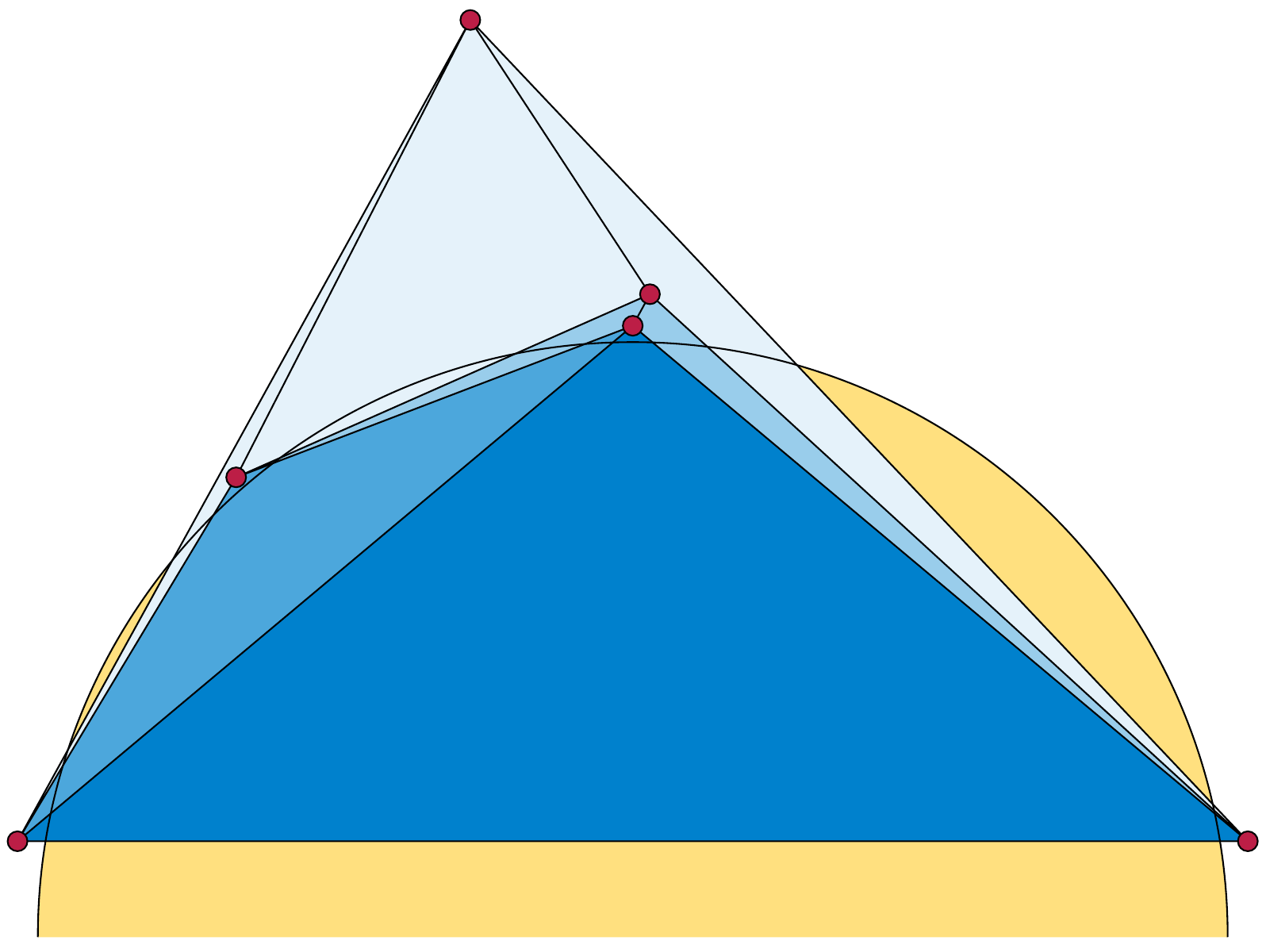}
\caption{Drawing of the graph of an octahedron with all faces crossed by a semicircle. The shading indicates the order in which triangles were added to the drawing in its canonical ordering, with darker triangles added earlier than lighter ones.}
\label{fig:oct-stabbed}
\end{figure}

Our induction proof can be turned into an algorithm in which we add vertices one at a time to a drawing, using the construction of \cref{lem:base} to add the first three vertices, and either \cref{lem:ear} or \cref{lem:fan} to add each subsequent ear. An example of the resulting drawing, for the graph of an octahedron with its faces crossed by a semicircle, is shown in \cref{fig:oct-stabbed}. However, to analyze the numerical precision of vertex coordinates needed for this construction, and the angular resolution of the result, we would need additional assumptions about $C$. As the figure shows, even for a well-behaved curve such as a semicircle, the angular resolution can be quite low: this construction is more useful in producing counterexamples than as a method for producing readable graph drawings.

\section{Conclusions and open problems}
We have shown that for every planar graph and every smooth convex curve, there exists a drawing of the graph in which all faces are crossed by the curve.
The most salient remaining open problem is the one that motivated this research: is there a constant $k$ such that all planar graphs can be drawn with their vertices on $\le k$ convex curves? It would also be of interest to more precisely characterize the graphs that can be drawn with all edges crossed by a convex curve (for which see \cref{sec:edges}) and to extend these results to beyond-planar graphs.

\bibliographystyle{plainurl}
\bibliography{stabbing}

\newpage
\appendix

\section{A graph whose edges cannot be touched by a convex curve}
\label{sec:edges}

We have seen that the graphs that have planar drawings where all vertices are touched by a given convex curve are exactly the outerplanar graphs,
and that all planar graphs have planar drawings where all faces are crossed by a given convex curve. This naturally raises the question of which planar graphs have planar drawings where all edges are touched or crossed by a given convex curve. We do not settle this question, but in this section we provide an example showing that it is nontrivial: not every planar graph has such a drawing. In fact, there exist planar graphs that do not have any such drawing, regardless of which convex curve we choose.

First, for the version of the question where we ask for all edges to be crossed by a convex curve, consider the graph $K_{2,2,2}$, the graph of the regular octahedron. This graph can have all edges touched by a convex curve, with some touches at the endpoint of an edge (\cref{fig:oct-edges}) but, as we prove, it is not possible for a convex curve to cross all edges at points in the relative interior of the edge.

\begin{figure}[h!]
\centering\includegraphics[scale=0.25]{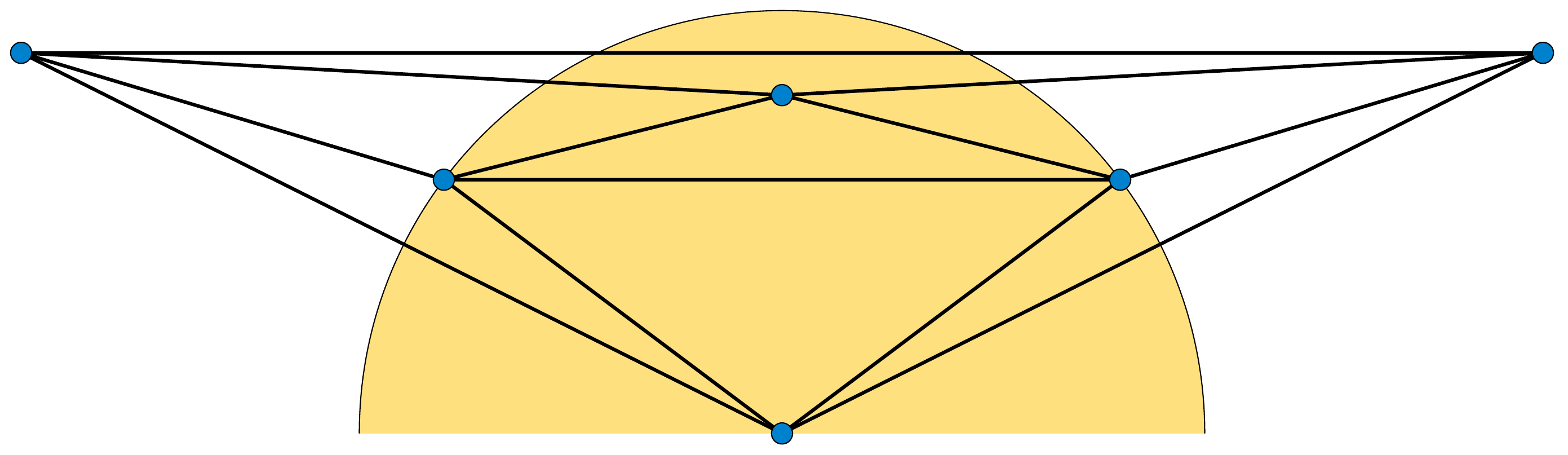}
\caption{The octahedral graph $K_{2,2,2}$ drawn with all edges touched by a semicircle.}
\label{fig:oct-edges}
\end{figure}

\begin{observation}
\label{obs:oct}
$K_{2,2,2}$ does not have a planar straight-line drawing for which there exists a convex curve $C$ that crosses all edges.
\end{observation}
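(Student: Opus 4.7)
Let $H=\mathrm{conv}(C)$, so that $C\subseteq\partial H$; after a small perturbation, assume no vertex of the drawing lies on $\partial H$. The plan is to derive a contradiction between two opposing consequences of the hypothesis that $C$ crosses every edge.

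The easy half is that the vertices lying in $\mathrm{int}(H)$ must be pairwise non-adjacent: if an edge $uv$ had both endpoints in $\mathrm{int}(H)$ then, by convexity, the whole segment $uv$ would lie in $\mathrm{int}(H)$ and would therefore avoid $C\subseteq\partial H$. Since $K_{2,2,2}$ has independence number $2$, at most two vertices of the drawing can lie in $\mathrm{int}(H)$.

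The harder half is that, on the contrary, three pairwise-adjacent vertices of the drawing are forced into $\mathrm{int}(H)$. Fix the outer face $T_{\mathrm{out}}$ of the drawing, let $T_{\mathrm{in}}$ be the unique face of $K_{2,2,2}$ sharing no vertex with $T_{\mathrm{out}}$, and let $v\in T_{\mathrm{in}}$. If $v$ lay in the exterior of $H$, then each edge $vu$ at $v$ could cross $C\subseteq\partial H$ only by entering $H$, which would force the direction from $v$ to $u$ into the angular wedge subtended by $H$ at $v$; since $v\notin H$, this wedge has measure strictly less than $\pi$, so all four neighbors of $v$ would lie in one open half-plane through $v$. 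I claim this cannot happen.

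The claim is purely combinatorial-geometric. The planar embedding of $K_{2,2,2}$ is unique up to reflection (the graph is $3$-connected), and the four octahedron-neighbors of $v$ consist of the other two vertices $w_1,w_2$ of $T_{\mathrm{in}}$ together with two vertices $x,y$ of $T_{\mathrm{out}}$, the third outer vertex being the non-neighbor of $v$ in $K_{2,2,2}$. This forces the four triangular faces incident to $v$ to occur in cyclic order $vxy$, $vxw_1$, $vw_1w_2=T_{\mathrm{in}}$, $vw_2y$ around $v$. The interior angle of the non-degenerate triangle $vxy$ at $v$ is $\angle xvy<\pi$, so the three remaining faces at $v$ together occupy the complementary arc of angular measure $2\pi-\angle xvy>\pi$, and the four neighbor-directions of $v$ span exactly this arc; no half-plane through $v$ can contain them. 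Hence each of the three vertices of $T_{\mathrm{in}}$ must lie in $\mathrm{int}(H)$, but those three vertices are pairwise adjacent (they bound the triangular face $T_{\mathrm{in}}$), contradicting the easy half. The main obstacle in executing the plan is simply the bookkeeping of faces around $v$, which is pinned down by the uniqueness of the planar embedding of the octahedron.
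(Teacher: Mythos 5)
Your proof is correct and takes essentially the same route as the paper's: both reduce to the two-nested-triangles drawing, rule out an inner-triangle vertex exterior to $\mathrm{conv}(C)$ using the fact that the triangular faces at that vertex each span an angle less than $\pi$ (you phrase this as ``the four neighbors cannot all lie in a half-plane through $v$,'' the paper equivalently as ``some incident edge lies in the open half-plane separated from $C$''), and then observe that an inner triangle contained in the hull has edges that $C$ cannot cross. The only cosmetic differences are your perturbation off $\partial H$ (the paper instead folds the boundary case into ``on or interior to $C$'') and the appeal to the independence number, which you do not actually need once all three inner vertices are forced into $\mathrm{int}(H)$.
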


\begin{proof}
Any drawing of this graph must consist of two nested triangles, connected to each other by six edges.
We consider the following cases for how $C$ can be positioned with respect to these triangles:
\begin{itemize}
\item If the inner triangle vertices are all on or interior to $C$, the inner triangle is uncrossed. In this case, $C$ can touch all edges (at the endpoints of the three edges of the inner triangle) but it cannot cross the inner three edges.
\item Otherwise, at least one vertex $v$ of the inner triangle is exterior to $C$. By the convexity of $C$ and by Minkowski's hyperplane separation theorem, there exists a line $\ell$ through $v$ disjoint from $C$. The three triangle faces of $K_{2,2,2}$ each span an angle less than $\pi$ at $v$, so the halfplane bounded by $\ell$ that does not contain $C$ must include portions of at least two of these faces. The edge separating these two faces is separated by $\ell$ from $C$, so $C$ does not touch this edge.\qedhere
\end{itemize}
\end{proof}

Now form a 30-vertex planar graph $G_{30}$ by replacing each face of an octahedron by a copy of the octahedral graph. This can be represented geometrically either as a convex polyhedron~(\cref{fig:octoct}) or as a compound of nine octahedra formed by gluing one face of an octahedron to each face of a central octahedron.

\begin{figure}[t]
\centering\includegraphics[width=0.5\textwidth]{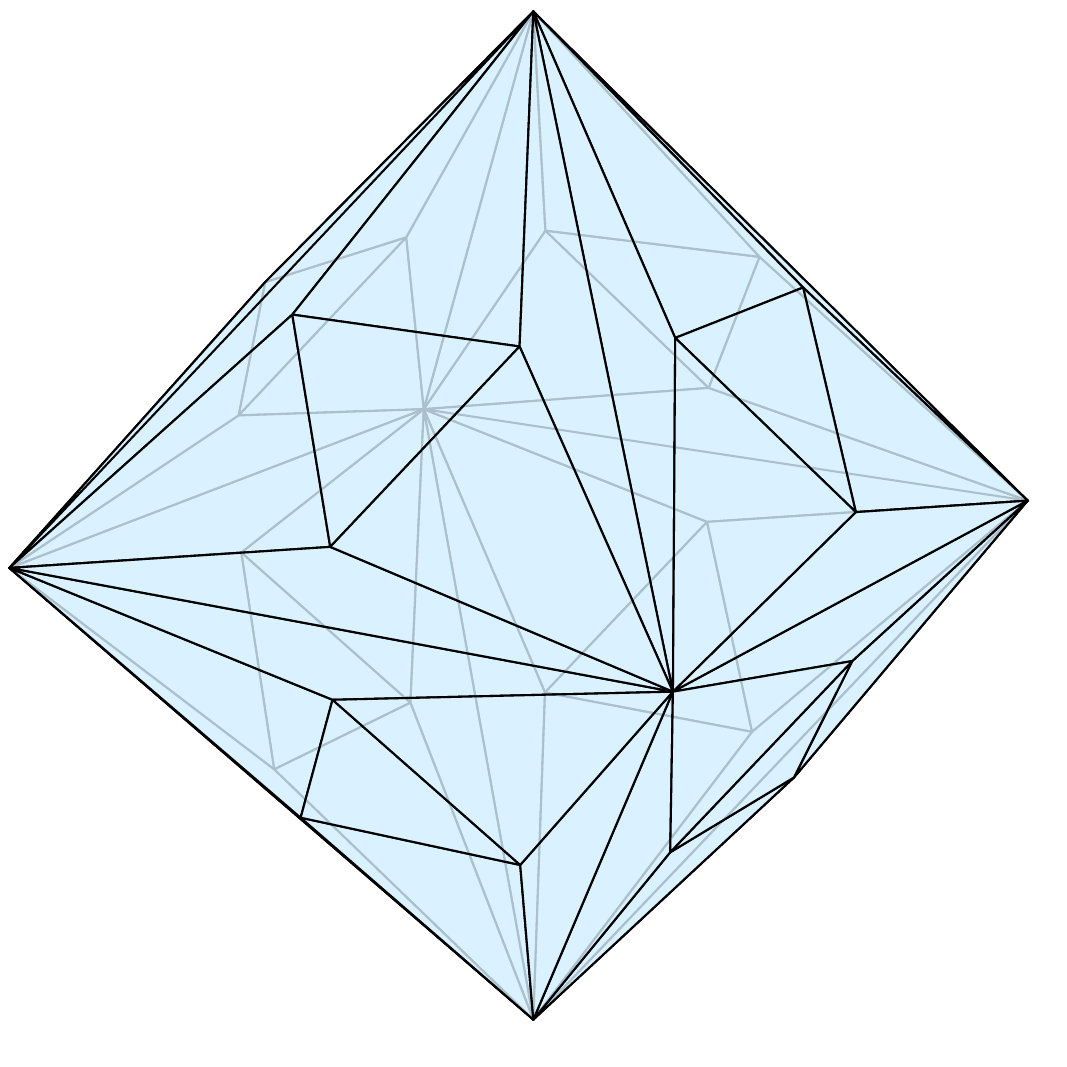}
\caption{The graph $G_{30}$, an octahedron with its faces replaced by nested copies of the octahedral graph.}
\label{fig:octoct}
\end{figure}

\begin{observation}
$G_{30}$ does not have a planar straight-line drawing for which there exists a convex curve $C$ that touches all edges.
\end{observation}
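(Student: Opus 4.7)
The plan is to imitate the proof of \cref{obs:oct} and then exploit the nested structure of octahedra in $G_{30}$. I would proceed in three stages: (i) force all relevant vertices of $G_{30}$ into the convex hull $H$ of $C$; (ii) bound by $2$ the number of vertices of any $K_{2,2,2}$-subgraph that lie off $C$; and (iii) exhibit an outer octahedron of $G_{30}$ in which three vertices are nonetheless forced off $C$, contradicting (ii).

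For (i), note that $G_{30}$ is a triangulation (it has $30$ vertices and $84$ edges, giving $56$ triangular faces), so every vertex $v$ not on the outer face of $G_{30}$ is incident only to triangular faces, each of interior angle less than $\pi$ at $v$. The hyperplane-separation step from the proof of \cref{obs:oct} then applies verbatim: if $v$ were exterior to $H$, a line through $v$ disjoint from $C$ would cut off a halfplane, not containing $C$, that contains portions of at least two consecutive triangular faces at $v$, and the edge between those two faces would lie in that halfplane and hence be disjoint from $C$, contradicting the touching assumption. So every interior vertex of $G_{30}$ lies in $H$: in particular all six central vertices and all inner vertices of every outer octahedron other than the one, call it $H_*$, that contains the outer face of $G_{30}$. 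For (ii), the key observation is that if both endpoints $u,v$ of an edge of a $K_{2,2,2}$-subgraph lie in $H$ but off $C$, then the segment $uv$ is contained in $H$ and meets $\partial H$ at most at $\{u,v\}$, so $uv\cap C=\emptyset$ and the edge is untouched; hence the vertices off $C$ form an independent set in $K_{2,2,2}$, which has independence number $2$.

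The crux is (iii). A short case analysis on which of the seven bounded faces of $H_*$ is the outer face of $G_{30}$ shows that at least one (in fact at least two) of the remaining seven outer octahedra contains no vertex of the outer face; fix any such one, $H_i$. All six vertices of $H_i$ are then interior vertices of $G_{30}$, so by (i) they lie in $H$. The three outer vertices of $H_i$ are central vertices of $G_{30}$, so they span a triangle $\Delta\subseteq H$; by convexity of $H$, the interior of $\Delta$ lies strictly inside $H$. But the three inner vertices of $H_i$ are placed strictly inside $\Delta$ in any non-degenerate planar straight-line drawing, so they lie strictly inside $H$ and are therefore off $C$. This gives three vertices of $H_i$ off $C$, contradicting the bound of $2$ from (ii). The main obstacle I anticipate is this final step's reliance on the case analysis at the outer face: the argument of (i) genuinely fails at outer-face vertices, whose outer angle can exceed $\pi$, so they cannot be pinned into $H$ by that argument; the remedy is that all such vertices lie in the single octahedron $H_*$, leaving enough other outer octahedra where all vertices are guaranteed interior for the geometric argument of (iii) to be applied.
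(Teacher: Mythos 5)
Your proof is correct, but it takes a genuinely different route from the paper's. The paper works inward from the center: it applies the case analysis of \cref{obs:oct} to the central octahedron to conclude that, if all of its edges are touched, its inner triangle must be drawn with all three vertices on or interior to the hull of $C$; the copy of $K_{2,2,2}$ glued to that inner face then lies entirely in the hull, so its own inner triangle is strictly interior and its three edges miss $C$. You instead work outward: your step (i) is a more general lemma (in any straight-line drawing of a triangulation with all edges touched by $C$, every vertex not on the outer face lies in the hull $H$), and you combine it with a counting argument over the eight outer octahedra to find one, $H_i$, none of whose six vertices is on the outer face; its three private vertices then lie strictly inside the triangle spanned by its three hull-bound central vertices, hence strictly inside $H$, so the triangle they span misses $C\subseteq\partial H$. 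The paper's argument buys brevity --- it uses only two of the nine octahedra and needs no case analysis on the location of the outer face --- while yours buys a reusable lemma (i) that would apply to other triangulations and avoids having to identify which face of the central octahedron is innermost in the drawing. Two small points to tighten. First, in (ii) the claim that segment $uv$ meets $\partial H$ at most at $\{u,v\}$ is not literally true: if both endpoints lie on a flat edge of $\partial H$ the whole segment does too; the conclusion $uv\cap C=\emptyset$ can still be rescued from the connectedness of $C$ and the assumption that $C$ is not contained in a line, but in any case step (iii) only invokes (ii) for vertices strictly interior to $H$, where the edge lies in the open interior of $H$ and the conclusion is immediate. Second, the assertion that the three private vertices of $H_i$ are drawn inside the triangle of its central vertices deserves a word of justification: that triangle is a separating triangle of $G_{30}$, and the outer face lies on its other side because $H_i\ne H_*$, so the private side must be drawn bounded.
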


\begin{proof}
Such a drawing would contain within it a drawing of the central octahedral graph $K_{2,2,2}$ (to which eight others have been glued).
By the case analysis of \cref{obs:oct}, this central octahedron must be drawn as two nested triangles, and in the only case in which $C$ can touch all of its edges, the inner of these two nested triangles must be drawn with all of its vertices on or interior to $C$. In $G_{30}$, this inner triangle has been replaced by another copy of $K_{2,2,2}$, whose drawing is entirely on or interior to $C$. The inner triangle of this copy is entirely interior to $C$, and its edges cannot be touched by $C$.
\end{proof}

\end{document}